\documentclass[twocolumn]{IEEEtran}
\usepackage{graphicx}
\usepackage{cite}
\usepackage{picinpar}
\usepackage{amsmath}
\usepackage{url}
\usepackage{flushend}
\usepackage[latin1]{inputenc}
\usepackage{colortbl}
\usepackage{soul}
\usepackage{multirow}
\usepackage{pifont}
\usepackage{color}
\usepackage{alltt}
\usepackage[hidelinks]{hyperref}
\usepackage{enumerate}
\usepackage{siunitx}
\usepackage{breakurl}
\usepackage{epstopdf}
\usepackage{pbox}
\usepackage[vlined,linesnumberedhidden,ruled,resetcount]{algorithm2e}

\usepackage{graphics} 
\usepackage{graphicx} 
\usepackage{epsfig} 
\usepackage{enumitem} 

\usepackage{enumerate}
\usepackage{amsmath} 
\usepackage{amssymb}  

\usepackage{amsthm}



\newcommand{\nextnr}{\stepcounter{AlgoLine}\ShowLn}


\newtheorem{prop}{Proposition}
\newtheorem{rem}{Remark}

\begin{document}
\title{Cylindrical Battery Fault Detection under Extreme Fast Charging: A Physics-based Learning Approach}

\author{
	\vskip 1em
	
    {
    Roya Firoozi, Sara Sattarzadeh, and Satadru Dey
	}

\thanks{
       {This work was supported by National Science Foundation under Grant No. 1908560 and 2050315. The opinions, findings, and conclusions or recommendations expressed are those of the author(s) and do not necessarily reflect the views of the National Science Foundation.
        
        R. Firoozi is with the Department of Mechanical Engineering, University of Berkeley, CA, USA. (e-mail: royafiroozi@berkeley.edu).
        
        S. Dey and S. Sattarzadeh are with the Department of Mechanical Engineering, The Pennsylvania State University, PA 16802, USA. (e-mail: skd5685@psu.edu, sfs6216@psu.edu). 
 }
 }
 }

\maketitle
	
\begin{abstract}
High power operation in extreme fast charging significantly increases the risk of internal faults in Electric Vehicle batteries which can lead to accelerated battery failure. Early detection of these faults is crucial for battery safety and widespread deployment of fast charging. In this setting, we propose a real-time {detection} framework for battery voltage and thermal faults. A major challenge in battery fault detection arises from the effect of uncertainties originating from sensor inaccuracies, nominal aging, or unmodelled dynamics. Inspired by physics-based learning, we explore a detection paradigm that combines physics-based models, model-based detection observers, and data-driven learning techniques to address this challenge. Specifically, we construct the {detection} observers based on an experimentally identified electrochemical-thermal model, and subsequently design the observer tuning parameters following Lyapunov's stability theory. Furthermore, we utilize Gaussian Process Regression technique to learn the model and measurement uncertainties which in turn aid the {detection} observers in distinguishing faults and uncertainties. Such uncertainty learning essentially helps suppressing their effects, potentially enabling early detection of faults. We perform simulation and experimental case studies on the proposed fault {detection} scheme verifying the potential of physics-based learning in early detection of battery faults.
\end{abstract}

\begin{IEEEkeywords}
Batteries, Extreme Fast Charging, Fault Detection, Physics-based Learning.
\end{IEEEkeywords}


\definecolor{limegreen}{rgb}{0.2, 0.8, 0.2}
\definecolor{forestgreen}{rgb}{0.13, 0.55, 0.13}
\definecolor{greenhtml}{rgb}{0.0, 0.5, 0.0}

\section{Introduction}


{Extreme} {fast charging typically refers to charging a battery to $80\%$ of its capacity within $10$ minutes \cite{ahmed2017enabling}.} The high power requirement for fast charging
significantly increases the risk of internal faults that affect voltage and temperature dynamics. 
In this research, we address this particular issue of battery safety under fast charging. 
Existing works on battery fault diagnostics can be broadly classified into two categories: \textit{model-based techniques} and \textit{data-driven techniques}. Examples of such model-based techniques include \cite{dey2017model,dey2015diagnostic,marcicki2010nonlinear,wei2019lyapunov,xiong2019online}. {On the other hand, learning-based techniques can be useful in circumventing the need for accurate models}. For example, data-driven battery diagnostic approaches are presented in \cite{saha2008uncertainty,zhao2017fault,zhang2018long,li2020battery,lee2021convolutional}.


However, there are certain limitations associated with these aforementioned approaches: (i) model-based approaches rely on accurate models identification of which can be a cumbersome process; (ii) most of the proposed model-based approaches utilize phenomenological type models which have limited capabilities to capture physical failure modes; (iii) data-driven approaches require significant amount of data encompassing probable fault scenarios, which may not be available in real-world settings; (iv) data-driven approaches might exhibit limitations on diagnosing unforeseen faults which were not present in training data; (v) finally, most of these approaches do not present a systematic way to distinguish fault from uncertainties, which is a fundamental challenge in any fault diagnosis algorithm design. {Specific to battery fault detection, the main challenge arises from effect of other (non-faulty) phenomena that cause voltage and temperature deviation. In case of voltage, these phenomena include voltage sensor noise, drift or bias, effect of unmodelled behavior such as electrolyte dynamics (while using a model-based detection approach), and effect of nominal aging which is generally a slower process than faults. Inaccurate heat generation model, temperature sensor noise, drift or bias, and thermal influence of adjacent cells contribute to similar effects in case of temperature. These phenomena, which can be combined as \textit{uncertainties}, prohibit early detection of smaller voltage and thermal faults. Hence, it is imperative to distinguish the effect of \textit{uncertainties} from faults to enable early detection of smaller faults.}

In light of the aforementioned limitations, our main contribution is the following: \textit{Departing from the existing model-only and data-only diagnostic approaches, we explore a different diagnostics paradigm that combines physics-based models, model-based detection observers, and data-driven learning techniques.} Our approach is inspired by model-based learning framework \cite{berkenkamp2015safe,recht2019tour}. Our approach can be summarized as follows: (i) we start with a coarsely identified uncertain physics-based model circumventing the cumbersome process of accurate model identification; (ii) we apply an online data-driven learning technique, namely Gaussian Process Regression \cite{rasmussen2003gaussian}, to learn the uncertainty functions in real-time thereby compensating for the model and measurement uncertainties; and (iii) based on the coarsely identified model and aided by the learned uncertainty functions, we design detection observers following Lyapunov's stability theory to diagnose battery faults. {The main advantages of the proposed approach are the following: (i) it can help overcome the limitations of model-based approaches (i.e. need for accurate model and presence of model uncertainties); and (ii) as opposed to data-driven techniques, it does not need for vast data encompassing all possible fault scenarios.}

The rest of the paper is organized as follows. Section II describes the phsyics-based battery models. Section III details the proposed fault detection framework. Section IV presents experimental and simulation results with discussion. Finally, section V concludes the work.

\section{Battery Electrochemical-Thermal Model}
In this section, we discuss the battery model adopted for this work. Specifically, we focus on electrochemical and thermal dynamics of batteries. We adopt the Single Particle Model (SPM) framework to capture electrochemical dynamics and focus on anode dynamics {\cite{santhanagopalan2006online,7004795}}:
\begin{align}
    & \frac{\partial c_a(x,t)}{\partial t} = \frac{D}{x^2} \frac{\partial}{\partial x} \Big( {x^2}\frac{\partial c_a(x,t)}{\partial x} \Big), \label{spm1}\\
    & \frac{\partial c_a(x,t)}{\partial x}\Big|_{x=0} = 0, \quad \frac{\partial c_a(x,t)}{\partial x}\Big|_{x=X} = \frac{-I(t)}{a_aFDA_aL_a}, \label{spm2}
\end{align}
where $x$ is the radial coordinate of the particle in {m}, $t$ is the time in {s}, $c_a$ is the Lithium concentration along the particle radius in {mol/m$^3$}, $D$ is the anode diffusion coefficient in {m$^2$/s}, $X$ is radius of the particle in {m}, $I$ is the applied current in {A} with $I>0$ indicating discharging, $a_a$ is the anode specific surface area in {m$^2$/m$^3$} which is computed as $a_a = 3\epsilon_a/X$ where $\epsilon_a$ is the active material volume fraction, $F$ is the Faraday's constant in {C/mol}, $A_a$ is the anode current collector area in {m$^2$}, and $L_a$ is the anode thickness in {m}. {In SPM framework, electrodes are approximated as spherical particles. Based on such approximation, \eqref{spm1} describes the solid-phase diffusion of Lithium ions in the anode, governed by the volume-averaged current acting on the on the boundary as given in \eqref{spm2}.}

The battery terminal voltage can be expressed as {\cite{7004795}}:
\begin{align}
    & V_{term}(t) = U_c\Big(\alpha_1 c_a(X) + \alpha_2\Big) + U_a\Big(c_a(X)\Big) - R_{b}I(t) \nonumber\\
    & + \frac{RT}{\alpha_c F}\sinh^{-1}\Big({\frac{I(t)}{2a_c A_c L_c i_{0c}}}\Big) + \frac{RT}{\alpha_a F}\sinh^{-1}\Big({\frac{I(t)}{2a_a A_a L_a i_{0a}}}\Big) , \label{termV}
\end{align}
where $V_{term}$ is the terminal voltage in {V}; $U_c(.)$ and $U_a(.)$ are the open circuit potential maps of cathode and anode, respectively; $\alpha_1 = -(\epsilon_aA_aL_a)/(\epsilon_cA_cL_c)$ and $\alpha_2 = m_{Li}/(\epsilon_cA_cL_c)$ with $m_{Li}$ being total moles of Lithium in the cell; $R$ is the universal gas constant in {J/mol-K}, $T$ is the average cell temperature in {K}, $\alpha_c$ and $\alpha_a$ are unitless charge transfer coefficients of cathode and anode, respectively; $a_c$ in {m$^2$/m$^3$}, $A_c$ in {m$^2$}, and $L_c$ in {m} are the specific area, current collector area, and thickness of the cathode, respectively; $i_{0c}$ and $i_{0a}$ are the exchange current densities of cathode and anode in {A/m$^2$}, respectively; $R_b$ is the internal resistance of the cell in {$\normalfont{\Omega}$}. {In \eqref{termV}, the first two terms represent the thermodynamic potential of the electrodes, the third term captures the effect of internal Ohmic resistances, and the last two terms represent the electric overpotential of the electrodes.}

We adopt the following thermal model {\cite{ALHALLAJ19991}}:
\begin{align}
    & \rho C_p \frac{\partial T(y,t)}{\partial t} = k \frac{\partial^2 T(y,t)}{\partial y^2}+ k \frac{1}{y}\frac{\partial T(y,t)}{\partial y} + \frac{\dot{Q}(t)}{V_b}, \label{therm1}\\
    & \frac{\partial T(y,t)}{\partial y}\Big|_{y=0} = 0, \quad \frac{\partial T(y,t)}{\partial y}\Big|_{y=Y} = -\frac{h}{k}(T(Y)-T_{\infty} ), \label{therm2}
\end{align}
where $t$ is the time in $s$, $y$ is the radial coordinate of the cell in $m$, $Y$ is the cell radius in $m$, $\rho$ is the cell density in {kg/m$^3$}, $C_p$ is the specific heat in {J/kg-K}, $k$ is the thermal conductivity in {W/m-K}, $h$ is the convection coefficient in {W/m$^2$-K}, $T_{\infty}$ is the cooling/ambient temperature, $V_b$ is the cell volume in {m$^3$}, and $\dot{Q}$ is the heat generation term computed by {\cite{ALHALLAJ19991}}
\begin{align}
    & \dot{Q}(t) = I(t)\Big\{ U_c\Big(\alpha_1 c_a(X) + \alpha_2\Big) + U_a\Big(c_a(X)\Big) - V_{term}(t)\Big\}. \label{heat}
\end{align}
{The model \eqref{therm1} is derived using energy balance principle and captures the spatio-temporal temperature dynamics along the cell radius in a cylindrical coordinate setting \cite{ALHALLAJ19991}. In \eqref{therm2}, the first boundary condition represents the zero temperature gradient in the center whereas the second boundary condition captures the convective heat transfer with the environment. The heat generation model \eqref{heat} captures the irreversible heat due to electrode polarization \cite{guo2011thermal}.} {In this work, we have ignored the reversible heat generation effect in \eqref{heat} to keep the heat generation model simple enough for identification and real-time computation purposes. Typically, the reversible heat model requires additional parameter information (e.g. entropic coefficients) which again leads to cumbersome parameter identification. The effect of such reversible heat is later captured by the uncertainties in thermal model.}

Both the nominal electrochemical and thermal models are in Partial Differential Equation (PDE) form. We convert these PDEs to a set of Ordinary Differential Equations (ODEs), and subsequently formulate a state-space model. First, considering the electrochemical PDE model \eqref{spm1}-\eqref{spm2}, we discretize the particle radius into $N+1$ nodes with each node's Lithium concentration defined as $c_i = c_a(i\delta_x), i = \{0,1,2,...,N\}$ where $\delta_x = X/N$ is the spatial difference between two adjacent nodes. Next, we follow the method of lines approach for PDE to ODE conversion 
leading to the following set of ODEs for each node:
\begin{align}
    & \dot{c}_1 = -2\gamma_1 c_1 + 2\gamma_1c_2,\nonumber\\
    & \dot{c}_i = (1-\frac{1}{i})\gamma_1c_{i-1} -2\gamma_1c_i + (1+\frac{1}{i})\gamma_1c_{i+1},\nonumber\\
    & \dot{c}_N = (1-\frac{1}{N})\gamma_1)c_{N-1} -(1-\frac{N}{i})\gamma_1c_{N} + \beta_1 I,\label{ss-echem}
\end{align}
with $c_0 = c_1$ from the boundary condition, $i \in \{2,...,N-1\}$, $\gamma_1 = D/\delta_x^2$, and $\beta_1 = -(1/a_aFA_aL_a\delta_x)(1+1/N)$. In a similar manner, we can discretize the thermal PDE equation \eqref{therm1}-\eqref{therm2} with the nodes $T_j = T(j\delta_y), j = \{0,1,2,...,M\}$ where $\delta_y = Y/M$ and arrive at the following set of ODEs:
\begin{align}
    & \dot{T}_1 = -1.5\gamma_2 T_1 + 1.5 \gamma_2 T_2 + \dot{Q}/V_b,\nonumber\\
    & \dot{T}_j = (1-\frac{1}{2i})\gamma_2c_{j-1} -2\gamma_2T_i + (1+\frac{1}{2i})\gamma_2T_{i+1} + \dot{Q}/V_b,\nonumber\\
    & \dot{T}_M = (1-\frac{1}{2M})\gamma_2T_{M-1} -(1-\frac{M}{i})\gamma_1T_{M} + \dot{Q}/V_b + \beta_2 T_{\infty},\label{ss-therm}
\end{align}
with $T_0 = T_1$ from the boundary condition, $j \in \{2,...,M-1\}$, $\gamma_2 = k/\delta_y^2$, and $\beta_2 = -2\gamma_2 + \gamma_2(1+1/2M)(1-\delta_yh/k)$. 


\subsection{Failure Modes}

In the extreme fast charging applications, the battery cell goes through higher amount of stress than the normal operating scenarios. Accordingly, the probability of the failure is higher than the nominal operation. {Some of the critical failure modes are: active material and Lithium inventory loss, undesirable side reactions, electrode fracture, electrolyte decomposition, Lithium plating, internal and external short circuits, electrode fracture, separator puncture, and abnormal heating. Further details of failure modes can be found in \cite{hu2020advanced}.} Specific to fast charging, some potential battery faults include Lithium plating, thermal runaway, and mechanical degradation such as stress-induced cathode cracking and separation of current collector and electrode \cite{tomaszewska2019lithium,ahmed2017enabling}. In this work, we focus on two types of faults to illustrate our proposed framework. First type is \textit{voltage faults} that mainly affect terminal voltage. Based on their origins, these faults can be classified into following categories: (i) faults originating from electrochemical side reactions such as Lithium plating, Solid Electrolyte Interphase (SEI) growth, and electrolyte decomposition; and (ii) faults originating from eletrical anomalies such as current leaks, external and internal short short circuits. Another way of classifying voltage faults is based on their dynamic response: (i) incipient type voltage faults that gradually show up in a longer period, e.g. SEI growth, and (ii) abrupt type voltage faults that posses faster dynamics, e.g. Lithium plating. Second type is \textit{abnormal heating faults} that mainly affect thermal dynamics. Some potential sources of abnormal heat generation are unwanted side reactions, seperator failure, overcharging, and external shock or puncture \cite{doughty2012general,bandhauer2011critical}. 

\subsection{State-Space Model with Faults and Uncertainties}
We define the state vectors $z_1 = [c_1,c_2,...,c_N]^T$ and $z_2 = [T_1,T_2,...,T_M]^T$, the inputs $u_1 = I$ and $u_2 = T_{\infty}$, and the outputs $y_1=V_{term}$ and $y_2=T_M$. Applying Euler's discretization and linearization on \eqref{termV}, we formulate the following discrete-time state space model from \eqref{ss-echem}, \eqref{termV}, \eqref{heat}, and \eqref{ss-therm}:
\begin{align}
    & {{z}_1}_{t+1} = A_1 {z_1}_t + B_1 {u_1}_t,\label{ss-echem-21}\\
    & {y_1}_t = C_1 {z_1}_t + D_1{u_1}_t + {\omega_V}_t + {\Delta_{V}}_t,\label{ss-echem-2}\\
    & {{z}_2}_{t+1} = A_2 {z_2}_t + f_2(z_1,y_1,u_1) + B_2 {u_2}_t + {\Delta_{T}}_t,\label{ss-therm-21}\\
    & {y_2}_t = C_2 {z_2}_t + {\omega_T}_t,\label{ss-therm-2}
\end{align}
where the subscript $t$ denotes the time index, $A_1 \in \mathbb{R}^{N\times N}$, $B_1\in \mathbb{R}^{N\times 1}$, $A_2 \in \mathbb{R}^{M\times M}$, $B_2\in \mathbb{R}^{M\times 1}$, $C_2 = [0,0,..,0,1] \in \mathbb{R}^{1\times M}$, $C_1 \in \mathbb{R}^{1\times N}$, $D_1 \in \mathbb{R}$, and the nonlinear function $f_2(.)$ is derived from \eqref{heat}. Furthermore, the term $\omega_V$ captures the model and measurement uncertainties as well as linearization error in electrochemical dynamics, $\omega_T$ captures the thermal model and measurement uncertainties, $\Delta_{V}$ captures the {voltage fault}, and $\Delta_{T}$ captures the thermal fault.

\begin{rem}\label{rem1}\normalfont
{As we will use the output measurements to design fault indicator signals in our detection scheme, we mainly focus on the uncertainties affecting the system outputs, namely, terminal voltage and surface temperature. From a physical viewpoint, the term $\omega_V$ captures the following: (i) effect of internal resistive component due to variation in electrolytic conductivity, (ii) effect of internal resistance rise due to power fade type aging, (iii) effect of electrolytic concentration states on voltage, (iv) change in the parameter $m_{Li}$ due to loss of active material and Lithium inventory, and (v) inaccuracies in voltage sensor. Similarly, the term $\omega_T$ captures the following: (i) the effect of non-uniform heat generation and non-uniform thermal conductivity on the surface temperature prediction, (ii) inaccurate knowledge of convection coefficient affecting surface temperature prediction, (iii) thermal conductivity variation due to aging, and (iv) inaccuracies in temperature sensor.}
\end{rem}

\section{Fault Detection Framework}
The proposed fault detection scheme is shown in Fig. \ref{fault-schematic}. The scheme consists of following four subsystems that interact with each other: (i) \textit{Electrochemical Detection Observer} receives the measured signals and in turn produces {voltage residual} signal $r_V$. 
(ii) \textit{Thermal Detection Observer} receives the measured signals and produces thermal residual signal $r_T$. 
(iii) \textit{Learning Algorithm} receives the measured signals and produces estimates of the uncertainties $\omega_T$ and $\omega_V$. 
(iv) \textit{Decision Maker} receives the residual signals and makes a decision whether a fault has occurred. Next, we will discuss the details of these subsystems.

\begin{figure}
    \centering
    \includegraphics[trim = 0mm 0mm 0mm 0mm, clip, scale=0.8, width=0.95\linewidth]{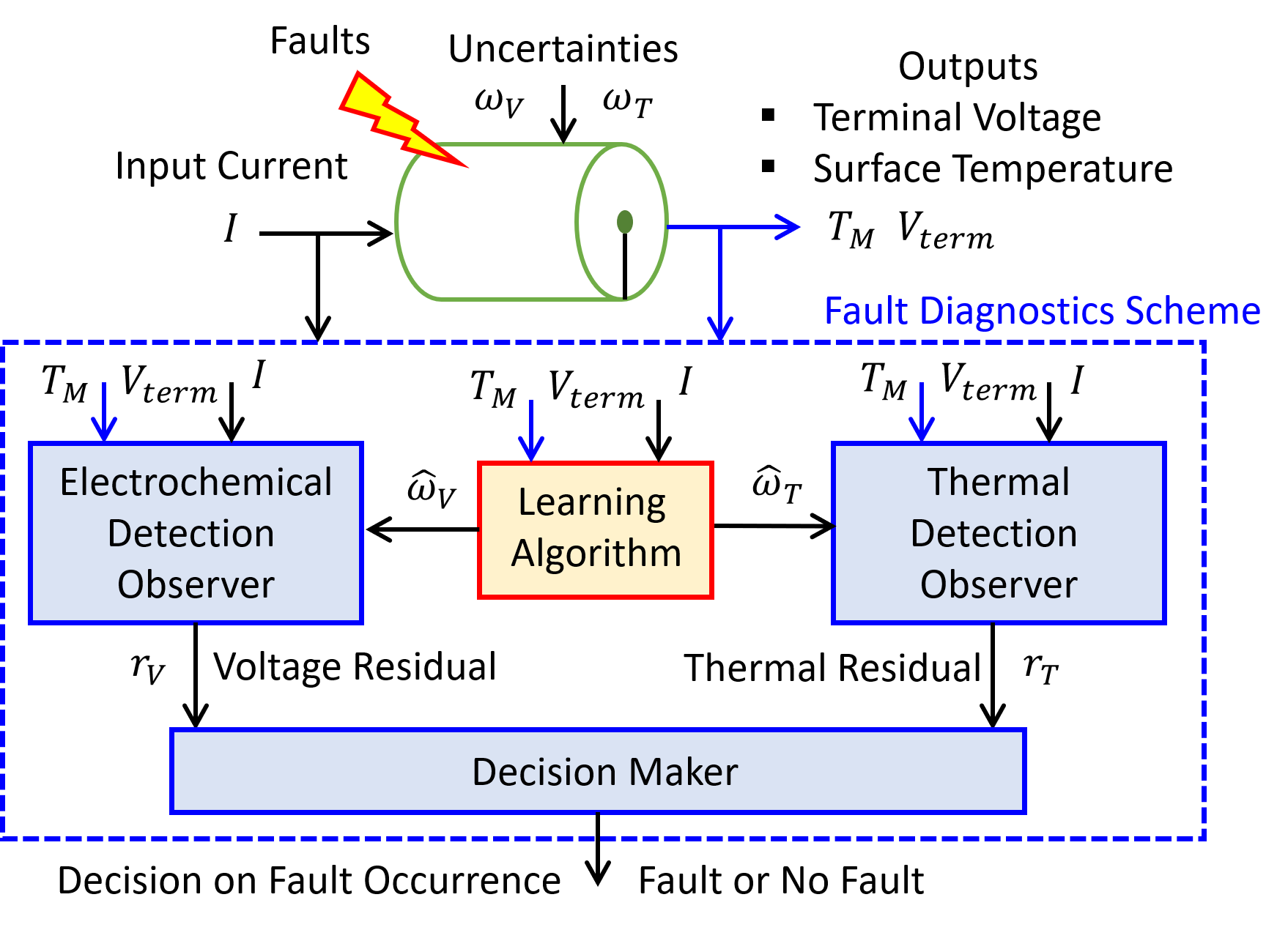}
    \caption{Fault detection scheme.}
    \label{fault-schematic}
\end{figure}

\subsection{Gaussian Process Regression based Learning}

From fault detection viewpoint, it is a crucial and challenging task to distinguish faults from uncertainties. In this scheme, we utilize Gaussian Process Regression technique to learn the uncertainties in real-time. Gaussian Process Regression technique is a non-parametric kernel-based probabilistic model, which is a well-suited method for long term learning. To update the physics-based models in real-time, we learn and update the uncertainties ${\omega_V}$ and ${\omega_T}$ of the system \eqref{ss-echem-2} and \eqref{ss-therm-2} using Gaussian Process Regression. Essentially, we follow the Algorithm \ref{algo:a} to implement the learning. Next, we describe the Gaussian Process Regression learning approach {outlined in the Algorithm \ref{algo:b}}.

\LinesNotNumbered
\begin{algorithm}
\KwIn{Measured data $I$, $V_{term}$, $T_M$}
\KwOut{Uncertainty functions $\hat{\omega}_V$, $\hat{\omega}_T$}
\nextnr
Initialize $\hat{\omega}_V(.)=0$, $\hat{\omega}_T(.)=0$.\\
\nextnr
Based on $I$, $V_{term}$ and $T_M$ data from \textit{Cycle \# 1}, learn the functions $\hat{\omega}_V(.)$ and $\hat{\omega}_T(.)$, {using Algorithm 2.}\\
\nextnr
\eIf{No fault detected in \textit{Cycle \# M}}{Use $I$, $V_{term}$ and $T_M$ data from \textit{Cycle \# $M$} to update the functions $\hat{\omega}_V(.)$ and $\hat{\omega}_T(.)$, {using Algorithm 2}\;}
   {Use $I$, $V_{term}$ and $T_M$ data from the last \textit{Cycle} before \textit{Cycle \# M} where no fault is detected\;
   Update the functions $\hat{\omega}_V(.)$ and $\hat{\omega}_T(.)$ based on that data, {{}using Algorithm 2}\;}
\nextnr
Repeat \textit{Step 3}, till battery End of Life (EOL) is reached.
\caption{{Life-time Learning Algorithm}}
\label{algo:a}
\end{algorithm}

\begin{algorithm}
{\KwIn{Training dataset $\mathcal{D}$}
\KwOut{Learned uncertainty $\hat{\omega}_{\text{new}}$ with mean $\mu$ and covariance $\sigma^2$}
\nextnr
Specify the hyperparameters $(\sigma_p^2,\mathcal{L})$ in \eqref{kernel}. \\
\nextnr
Compute the entries of matrix $\mathcal{Q}$ in (\ref{Q}) by evaluating the kernel \eqref{kernel} for the specified hyperparameters.\\
\nextnr
Use (\ref{Q}) to compute (\ref{mu_new}) and (\ref{cov_new}).\\
Return the predicted uncertainty $\hat{\omega}_{\text{new}}$ represented by the mean $\mu$ and covariance $\sigma^2$ in (\ref{mu_new}) and (\ref{cov_new}), respectively.}\\
\caption{{Gaussian Process Learning Algorithm}}
\label{algo:b}
\end{algorithm}

As explained in the Algorithm \ref{algo:a}, in order to use a cycle data as the training set for Gaussian Process Regression, the algorithm \ref{algo:a} first checks if no fault is detected in that cycle, otherwise the cycle data is discarded. So we run the learning algorithm \ref{algo:b} only for no-fault cycles where ${\Delta_{V}}_t$ is zero. {Two separate Gaussian Process models are trained to learn voltage ${\omega_V}$ and temperature ${\omega_T}$ uncertainties.}
To train a Gaussian Process Regression model, the first step is to create a training dataset. We use the previously observed {time-series} data in the most recent (past) cycle $\mathcal{D}$ as the training set. {The dataset contains a matrix of inputs $X \in \mathbb{R}^{N\times3}$ and a vector of outputs (labels) $Y \in \mathbb{R}^N$, where $N$ denotes the cycle length and 3 denotes the three types of measurement signals. The training dataset is defined as $\mathcal{D}=\{X = \{I_l, V_{{term}_l}, T_{{M}_l}\}, Y = \{\omega_l\}\}_{l=1:N}$}, where $I_l$ is the input current, $V_{{term}_l}$ is the measured voltage, and  {${\omega}_l$ is the residual between the actual measurements and the predicted values computed by the corresponding model for voltage \eqref{ss-echem-2} or temperature \eqref{ss-therm-2}.}

After creating the dataset, the next step is to train a Gaussian Process Regression model on that data. {The algorithm \ref{algo:b} outlines and summarises the steps for training a Gaussian process model}. 
Our goal is to learn a model for the nonlinear function ${\omega}_t$.
Since no prior knowledge is available for ${\omega}_t$, we choose the prior mean as zero and the covariance between any two data points $\omega(v_l)$ and $\omega(v_q)$ is defined as the squared-exponential kernel {\cite{3569}} 
\begin{equation}\label{kernel}
    k(v_l,v_q) = \sigma_{p}^2 \text{exp}(-\frac{1}{2}(v_l-v_q)^T \mathcal{L}^{-2}(v_l-v_q)),
\end{equation}
where $\sigma_{p}^2$ is the signal variance and the matrix $\mathcal{L}$ is a diagonal matrix and its diagonal elements are length scales that represent the function smoothness parameters. 
{The squared-exponential kernel is widely-used to define the covariance in time-series data and} is suitable for modeling smooth functions. Therefore, it is appropriate for our application, since the battery terminal voltage and surface temperature evolve in a sufficiently smooth manner. {The unknown hyper-parameters ($\sigma_{p}^2$, $\mathcal{L}$) associated with the kernel \eqref{kernel}, can either be specified by the user or by maximizing log-likelihood estimation on the training data or by cross-validation on the training data \cite{3569}. In this study, we consider the first option, in which the user determines constant hyperparameters by tuning and without performing optimization.}   
   
After specifying {the hyperparameters, the prior distribution is defined using \eqref{kernel}}. Then, the joint probability distribution of the new point of interest {(test point)} and the past observation data is computed as
$
\begin{bmatrix}
\omega\\
\hat{\omega}_{\text{new}}
\end{bmatrix}
\sim \mathcal{N}({0,\mathcal{Q}}),
$ 
where $\omega$ is the vector of observed data $\omega = \{\omega(v_1),...,\omega(v_N)\}$ and $\hat{\omega}_{\text{new}}$ is the function value associated with the new point of interest  $v_\text{new}$. The vector $\mathbf{0} \in \mathbb{R}^{N+1}$ is the mean and the covariance matrix is given by {\cite{3569}}
\begin{align}\label{Q}
    \mathcal{Q} = \begin{bmatrix}
\Sigma_{1:N,1:N} & \Sigma_{1:N,\text{new}}\\
\Sigma_{\text{new},1:N} & \Sigma_{\text{new},\text{new}}\\
\end{bmatrix},
\end{align}
where $\Sigma_{1:N,1:N}$ is defined by kernel \eqref{kernel} and has entries $(\Sigma_{1:N,1:N})_{lq} = k(v_l,v_q)$ for $l,q \in \{1,..,N\}$, $\Sigma_{\text{new},1:N}$ is defined as $[k(v_\text{new},v_1),...,k(v_\text{new},v_N)]$ and $\Sigma_{\text{new},\text{new}}$ is $k(v_\text{new},v_\text{new})$. The predictive posterior distribution on our point of interest is calculated as a multivariate Gaussian conditioned on the past observations is $\hat{\omega}_{\text{new}}|\mathcal{D} \sim \mathcal{N}({\mu(\omega_{\text{new}}),\sigma^2(\omega_{\text{new}})})$ {\cite{3569} with mean $\mu$ and covariance $\sigma^2$ defined as}
\begin{align}
    \mu(\omega_{\text{new}}) &= (\Sigma_{\text{new},1:N})(\Sigma_{1:N,1:N})^{-1}\omega \label{mu_new}\\
    \sigma^2(\omega_{\text{new}}) & = \Sigma_{\text{new},\text{new}} \nonumber\\
    &-(\Sigma_{\text{new},1:N})(\Sigma_{1:N,1:N})^{-1}(\Sigma_{1:N,\text{new}}) \label{cov_new}.
\end{align}

{Note that the Gaussian Process described above is presented in general form. However, to learn voltage and temperature uncertainties we use slightly different Gaussian Processes. To estimate the voltage uncertainty $\hat{\omega}_V$, 
the training dataset is $\mathcal{D}_V = \{I_l, V_{{term}_l},{\omega_V}_l\}_{l=1\colon N}$, where ${\omega_V}_l$ is computed as ${\omega_V}_l = V_{{term}_l} - (C_1 {z_1}_l + D_1{u_1}_l)$ from \eqref{ss-echem-2}. Also, to estimate the temperature uncertainty, $\hat{\omega}_T$, 
the training dataset is $\mathcal{D}_T = \{I, V_{{term}_l},T_{Ml},{\omega_T}_l\}_{l=1\colon N}$ in which $T_{Ml}$ is the measured surface temperature data and ${\omega_T}_l$ is computed as ${\omega_T}_l = T_{Ml} - C_2 {z_2}_l $ from \eqref{ss-therm-2}.}

\begin{rem}\normalfont
{As mentioned in Algorithm 1, the uncertainty models $\hat{\omega}_V(.)$ and $\hat{\omega}_T(.)$ are learned and periodically updated online as the battery cycles. Moreover, these uncertainty models capture a small subset of the physical modes as compared to the entire battery model, as clarified in Remark \ref{rem1}. Hence, the uncertainty models require lesser amount of data to be trained. Furthermore, there is no assumption on the uncertainty model structure and operating scenarios. Therefore, such learning framework can be applied to any operating conditions and the algorithm will adapt to the same.}
\end{rem}

\subsection{Design of Detection Observers}
Based on the electrochemical and thermal models \eqref{ss-echem-21}-\eqref{ss-echem-2} and \eqref{ss-therm-21}-\eqref{ss-therm-2}, we choose the following structure for the detection observers:
\begin{align}
    & \hat{z}_{1t+1} = A_1 \hat{z}_{1t} + B_1 u_{1t} + L_V (y_{1t}-\hat{y}_{1t}),\label{ss-echem-obs}\\
    & \hat{y}_{1t} = C_1 \hat{z}_{1t} + D_1 u_{1t}+ \hat{\omega}_{Vt},\label{ss-echem-obs-2}\\
    & r_{Vt} = y_{1t}-\hat{y}_{1t}, \label{ss-echem-res}\\
    & \hat{z}_{2t+1} = A_2 \hat{z}_{2t} + f_2(\hat{z}_1,y_1,u_1) + B_2 u_{2t} + L_T(y_{2t}-\hat{y}_{2t}),\label{ss-therm-obs}\\
    & \hat{y}_{2t} = C_2 \hat{z}_{2t} + \hat{\omega}_{Tt},\label{ss-therm-obs-2}\\
    & r_{Tt} = y_{2t}-\hat{y}_{2t}, \label{ss-therm-res}
\end{align}
where $\hat{k}$ is the estimate of $k$, $L_V \in \mathbb{R}^{N\times 1}$ and $L_T \in \mathbb{R}^{M\times 1}$ are the observer gains to be designed, and ${\hat{\omega}_V}$ and ${\hat{\omega}_T}$ are the estimate of the uncertainties provided by the learning algorithm. Subsequently, subtracting \eqref{ss-echem-obs}-\eqref{ss-echem-obs-2} from \eqref{ss-echem-21}-\eqref{ss-echem-2} and \eqref{ss-therm-21}-\eqref{ss-therm-2} from \eqref{ss-therm-obs}-\eqref{ss-therm-obs-2}, we can write the observers' error dynamics as:
\begin{align}
    & \tilde{z}_{1 t+1} = (A_1-L_VC_1) \tilde{z}_{1t} - L_V (\Delta_{V_{2t}} + {{\epsilon}_V}_t),\label{ss-echem-e}\\
    & r_{Vt} = \tilde{y}_{1t} = C_1 \tilde{z}_{1t} + \Delta_{Vt} + \epsilon_{Vt},\label{ss-echem-e-2}\\
    & \tilde{z}_{2t+1} = (A_2 -L_T C_2) \tilde{z}_{2t} + \tilde{f}_2 + \Delta_{Tt} - L_T \epsilon_{Tt},\label{ss-therm-e}\\
    & r_{Tt} = \tilde{y}_{2t} = C_2 \tilde{z}_{2t} + \epsilon_{Tt},\label{ss-therm-e-2}
\end{align}
where $\tilde{k}=k-\hat{k}$ is the estimation error, $\tilde{f}_2={f}_2({z}_1,y_1,u_1)-{f}_2(\hat{z}_1,y_1,u_1)$, ${{\epsilon}_V}$ and ${{\epsilon}_T}$ represent the error in learned uncertainties.

Next, we present the following proposition that illustrates the convergence properties of the error dynamics and residual signals as well as design conditions for the observer gains $L_V$ and $L_T$.

\begin{prop}
Considering the estimation error dynamics \eqref{ss-echem-e}-\eqref{ss-therm-e-2}, the following are true:
\begin{enumerate}
    \item {in the presence of no fault and no learning error, i.e. ${\Delta_{V_1}}=0,{\Delta_{V_2}}=0,{\Delta_{T}}=0,{\epsilon}_V=0,{\epsilon}_T=0$, the estimation errors ${\tilde{z}_1}$ and ${\tilde{z}_2}$, and the residual signals $r_V$ and $r_T$ will asymptotically converge to zero starting from any non-zero initial condition};
    \item in the presence of fault and learning error, i.e. ${\Delta_{V_1}} \neq 0,{\Delta_{V_2}} \neq 0, {\Delta_{T}} \neq 0, {\epsilon}_V \neq 0, {\epsilon}_T \neq 0$, the estimation errors ${\tilde{z}_1}$ and ${\tilde{z}_2}$, and the residual signals $r_V$ and $r_T$ will remain uniformly bounded;
\end{enumerate}
if there exist symmetric positive definite matrices $P_1$ and $P_2$ such that the following conditions are satisfied:
\begin{align}
& \underline{\lambda}_Q + \Gamma x_1 < 0, \ \ \underline{\lambda}_Z + \vartheta x_2 < 0. \label{cond-1}
\end{align}
where $\Gamma$ and $\vartheta$ are arbitrary positive numbers, $x_1 = \left\|(A_1-L_VC_1)^TP_1\right\|$, $x_2 = \left\|(A_2-L_TC_2)^TP_2\right\|$, $\underline{\lambda}_Q$ and $\underline{\lambda}_Z$ are the minimum eigen values of $[(A_1-L_VC_1)^TP_1(A_1-L_VC_1)-P_1]$ and $[(A_2-L_TC_2)^TP_2(A_2-L_TC_2)-P_2]$, respectively.
\end{prop}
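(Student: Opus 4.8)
The plan is to prove both claims via a discrete-time Lyapunov analysis carried out separately on the electrochemical and thermal error subsystems, while respecting the cascade in which the thermal error \eqref{ss-therm-e} is driven by the electrochemical error through $\tilde{f}_2$. I would begin by selecting the quadratic candidates $V_{1t} = \tilde{z}_{1t}^T P_1 \tilde{z}_{1t}$ and $V_{2t} = \tilde{z}_{2t}^T P_2 \tilde{z}_{2t}$ with $P_1,P_2\succ 0$, and computing the one-step differences $\Delta V_{1t}=V_{1,t+1}-V_{1t}$ and $\Delta V_{2t}=V_{2,t+1}-V_{2t}$ by inserting \eqref{ss-echem-e} and \eqref{ss-therm-e}. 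Writing $d_{1t}=-L_V(\Delta_{V_{2t}}+\epsilon_{Vt})$, the electrochemical difference separates into the quadratic form $\tilde{z}_{1t}^T[(A_1-L_VC_1)^TP_1(A_1-L_VC_1)-P_1]\tilde{z}_{1t}$, a cross term $2\tilde{z}_{1t}^T(A_1-L_VC_1)^TP_1 d_{1t}$, and a disturbance quadratic $d_{1t}^TP_1 d_{1t}$; the thermal difference has the analogous three pieces together with the nonlinear increment $\tilde{f}_2$ and the forcing $\Delta_{Tt}-L_T\epsilon_{Tt}$.

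For claim (1) I would set every fault and learning-error term to zero, so $d_{1t}=0$ and $\Delta V_{1t}=\tilde{z}_{1t}^T[(A_1-L_VC_1)^TP_1(A_1-L_VC_1)-P_1]\tilde{z}_{1t}$. Bounding this quadratic form through its eigenvalue estimate, condition \eqref{cond-1} forces $\Delta V_{1t}<0$ for $\tilde{z}_{1t}\neq 0$, hence $\tilde{z}_1\to 0$ asymptotically. For the thermal loop I would impose a Lipschitz bound $\|\tilde{f}_2\|\le\kappa\|\tilde{z}_1\|$, valid because $f_2$ is derived from the smooth map \eqref{heat}; the thermal error is then an asymptotically stable linear system forced by the vanishing input $\tilde{f}_2$, so a standard cascade/input-to-state argument yields $\tilde{z}_2\to 0$. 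Residual convergence is then immediate: with zero disturbances \eqref{ss-echem-e-2} and \eqref{ss-therm-e-2} reduce to $r_{Vt}=C_1\tilde{z}_{1t}$ and $r_{Tt}=C_2\tilde{z}_{2t}$, both tending to zero.

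For claim (2) I would retain the disturbance terms, assume them uniformly bounded (they aggregate bounded faults and learning errors), and split every cross term with Young's inequality, e.g. $2\tilde{z}_{1t}^T(A_1-L_VC_1)^TP_1 d_{1t}\le\Gamma x_1\|\tilde{z}_{1t}\|^2+\Gamma^{-1}x_1\|d_{1t}\|^2$; the free parameters $\Gamma$ and $\vartheta$ appearing in \eqref{cond-1} are exactly those introduced at this step for the voltage and thermal subsystems. Collecting terms gives $\Delta V_{1t}\le(\underline{\lambda}_Q+\Gamma x_1)\|\tilde{z}_{1t}\|^2+c_1$ and, after using the Lipschitz bound on $\tilde{f}_2$, $\Delta V_{2t}\le(\underline{\lambda}_Z+\vartheta x_2)\|\tilde{z}_{2t}\|^2+c_2+\rho\|\tilde{z}_{1t}\|^2$, with $c_1,c_2$ depending only on the disturbance bounds and $\rho$ collecting the coupling. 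Since \eqref{cond-1} makes both leading coefficients negative, each $\Delta V$ is negative whenever $\|\tilde{z}\|$ exceeds a disturbance-dependent threshold, which gives uniform ultimate boundedness of $\tilde{z}_1$ and $\tilde{z}_2$; boundedness of $r_V$ and $r_T$ then follows from \eqref{ss-echem-e-2} and \eqref{ss-therm-e-2} together with the disturbance bounds.

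The step I expect to be the main obstacle is the thermal subsystem, which is both nonlinear in the state through $\tilde{f}_2$ and cascaded onto the electrochemical error, so its stability cannot be established in isolation. Making this rigorous requires a usable Lipschitz constant for $f_2$ and a composite or sequential Lyapunov argument that transfers the convergence (claim 1) or ultimate boundedness (claim 2) of $\tilde{z}_1$ into the thermal loop, together with the Young's-inequality bookkeeping that reproduces the precise form of \eqref{cond-1} and absorbs the $\tilde{f}_2$ contribution into the thermal coefficient without upsetting its sign.
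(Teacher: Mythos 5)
Your proposal follows essentially the same route as the paper: the same quadratic Lyapunov candidates, the same three-term decomposition of the one-step difference into a state quadratic form, a cross term, and a disturbance quadratic, and the same Young's-inequality splitting that produces the coefficients $\underline{\lambda}_Q+\Gamma x_1$ and $\underline{\lambda}_Z+\vartheta x_2$ appearing in \eqref{cond-1}. The one place you go beyond the paper is the thermal subsystem: the paper dismisses it with ``following similar steps,'' whereas you correctly observe that \eqref{ss-therm-e} is driven by $\tilde{z}_1$ through $\tilde{f}_2$ and therefore requires a Lipschitz bound on $f_2$ and a cascade argument --- a detail the paper's own proof omits and your plan explicitly supplies.
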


\begin{proof}
See Appendix.
\end{proof}

\subsection{Design of Decision Maker}

The decision maker decides whether a fault has occurred based on the following detection logic:
\begin{align}
    r_i \leq \delta_i \implies \text{no fault}, r_i > \delta_i \implies \text{fault occurrence},
\end{align}
where $i \in \{V,T\}$ and $\delta_i$ are the thresholds. Note that even with learning algorithm there is always a possibility that the residual signals will be non-zero even under no fault conditions. The thresholds $\delta_i$ deal with this issue and provide robustness to such errors. {Next, the following steps are performed for threshold selection.}

{\textit{Step 1:} We run the learning-based detection observers under different no-fault operating conditions. Typically, such operating scenarios are generated using Monte-Carlo type of simulation studies or experimental studies. As we focus on fast charging scenarios in this work, we run the observers under experimental fast charging cycles. Such runs produce residual signal data $r_i$ under no-fault but uncertain conditions. Referring to \eqref{ss-echem-21}-\eqref{ss-therm-2} and \eqref{ss-echem-obs}-\eqref{ss-therm-e-2}, these conditions mean $\omega_i \neq 0$ and $\Delta_i = 0$. Furthermore, considering \eqref{ss-echem-e-2} and \eqref{ss-therm-e-2}, the amplitudes of $r_i$ depend on the amplitudes of the learning error $\epsilon_i$, (i.e. $r_i(\epsilon_i)$).}

{\textit{Step 2:} Next, we find the maximum absolute amplitude of the residual data $r_i$ collected in \textit{Step 1} and set that value as the threshold $\delta_i$, that is $\delta_i = \max \left|r_i\right|$. Effectively, such threshold is related to the maximum learning error in the learning-based detection observers. In other words, the threshold approximately equals to the upper bound of the residuals under uncertain but no-fault conditions, assuming the data $r_i$ comprehensively capture no-fault operating scenarios.}

\section{Results and Discussion}
In the following subsections, we discuss the results of the proposed fault detection scheme. All experiments are conducted on Arbin battery testing system (LBT21084). First, we identify the nominal electrochemical-thermal model for a commercial 18650 Lithium-ion battery cell based on experimental current, terminal voltage, and surface temperature data. The cell has the following characteristics: Graphite anode and NMC cathode, 3 $Ah$ nominal capacity, terminal voltage range 4.2-2.5 V, and maximum fast charge current 4 $A$. Essentially, we have solved the following optimization problem to identify the cell parameters: $\underset{\theta}{\min} \ rms\{ X_e - X_m(\theta)\}$ with subject to \eqref{termV}, \eqref{ss-echem}, \eqref{heat}, and \eqref{ss-therm} 
where $X_e$ and $X_m$ denote experimental and model voltage and temperature data, respectively, and $\theta = \{ D, A_a, A_c, R_b, m_{Li}, \epsilon_a, \epsilon_c, h, C_p, k\}$ is the parameter vector identified as 	$D$=$1.022\times 10^{-14}$ $m^2/s$, $A_a$=$0.09$ $m^2$, $A_c$=$0.048$ $m^2$,$\epsilon_a$=$0.8$, $\epsilon_c$=$0.6516$, $R_b$=$0.006$ $\Omega$, $m_{Li}$=$0.1796$ $moles$, $h$=$16.78$ $W/(m^2-K)$, $C_p$=$907$ $J/(kg-K)$, $k$=$1.79$ $W/(m-K)$. 
{We have used this identified model for the subsequent case studies. The learning algorithm is also applied to learn the uncertainty models. In this particular case, we have used battery data from cycle \# 1 through cycle \# 40 under constant current constant voltage (CCCV) charging to train the uncertainty models. A comparison of the model and experimental data under CCCV charging is shown in Fig. \ref{fig2}.}


\begin{figure}
    \centering
     \includegraphics[trim = 0mm 5mm 0mm 0mm, clip, scale=0.8, width=0.95\linewidth]{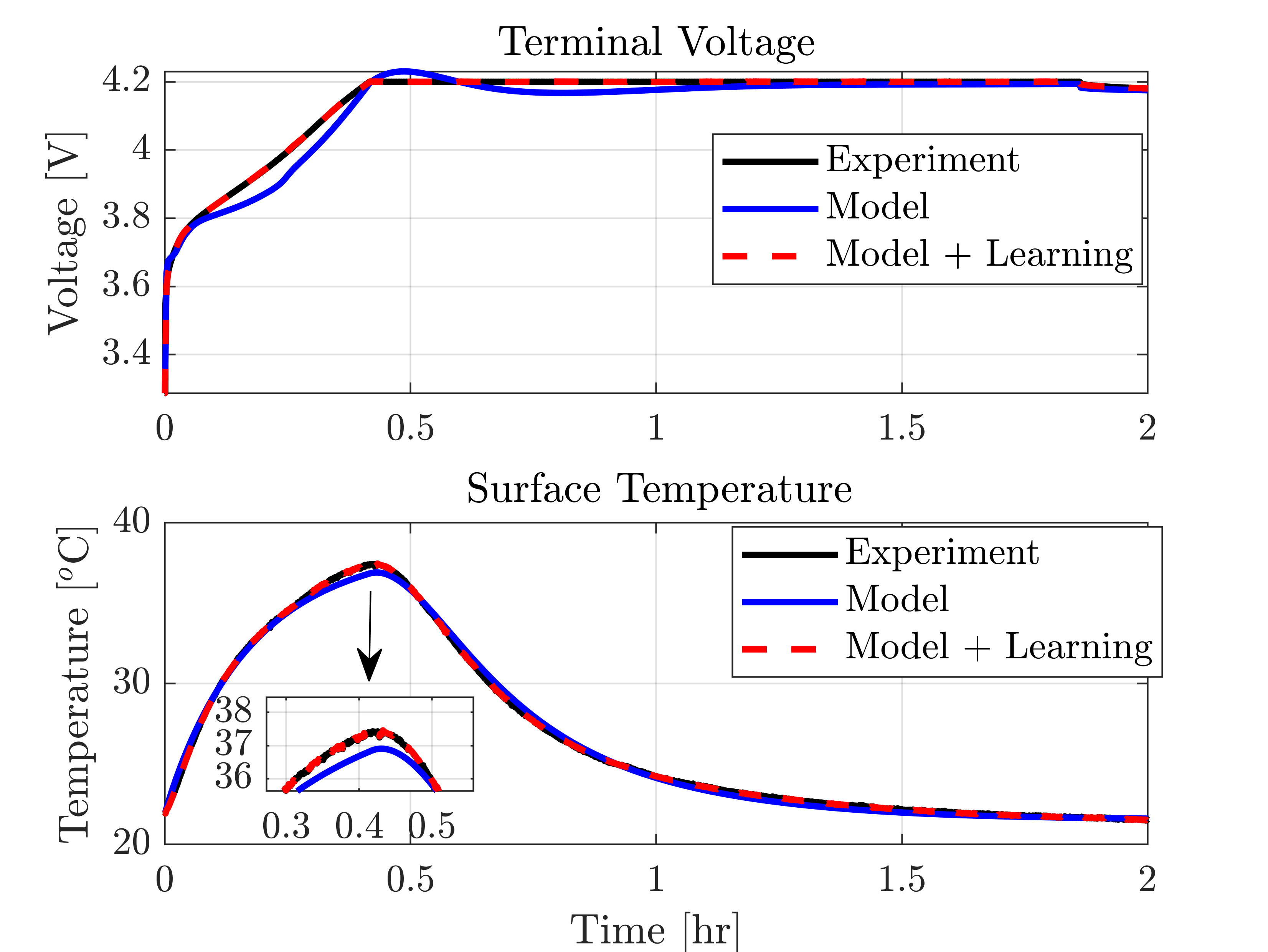}
    \caption{Comparison of model output and experimental data under constant current constant voltage (CCCV) charging scenario. For model only case, the RMS errors are $27.4$ $mV$ and $0.3$ $^oC$ whereas model along with learning have RMS errors of $0.56$ $mV$ and $0.012$ $^oC$.}
    \label{fig2}
\end{figure}

\subsection{Voltage Fault Detection}
In this section, we test the performance of the proposed fault diagnosis scheme under voltage fault. As mentioned before, a voltage fault can be caused by different internal anomalies. In this study, we mainly focus on abrupt type voltage fault caused by Lithium plating which is a highly probably fault mode under fast charging \cite{tomaszewska2019lithium,ahmed2017enabling}. However, the proposed approach is applicable to other type of voltage faults as well. To emulate the fault, we have modified the nominal terminal voltage by adding a faulty voltage component mimicking the voltage data under Lithium plating presented in \cite{yang2018look}. As shown in \cite{yang2018look}, the voltage plateau in lower State-of-Charge region exhibits a \textit{overshoot} type behavior under Lithium plating during high current charging, which is not present in the absence of plating. Experimental voltage data during high current charging and corresponding prediction of Lithium plating is shown in Fig. 1 of \cite{yang2018look}. We incorporate similar \textit{overshoot} type behavior in our voltage fault injection. The faulty voltage component is modelled as: $a_1e^{-0.5(x_1-\mu)}+a_2\sin(x_2)$ where $x_1\in[-\pi,\pi], x_2\in[\frac{\pi}{3},\pi]$ where $a_1$ and $a_2$ represent the fault magnitude. Subsequently, such faulty voltage data was fed to the electrochemical detection observer to test its performance. {The threshold has been designed to be $\delta_V=0.01$ V following the process mentioned in Section III.C.} We have tested four fault cases that capture various magnitudes: Fault case 1: $a_1 =0.003 $ and $a_2 = 0.0075$, Fault case 2: $a_1 = 0.0048$ and $a_2 = 0.0120$, Fault case 3: $a_1 = 0.009$ and $a_2 = 0.0225$. The responses of the terminal voltage ($V_t$) and voltage residual ($r_V$) under these faults are shown in Fig. \ref{fig4B}. {It is observed that all the fault cases are detected as the residuals crossed the threshold in 89, 70 and 49 seconds, respectively. However, Fault case 1 found to be the minimum fault size that can be detected. This illustrates the effectiveness of the proposed scheme in detecting voltage faults.}

\begin{figure}
    \centering
    \includegraphics[trim = 0mm 0mm 0mm 0mm, clip, scale=0.8, width=0.95\linewidth]{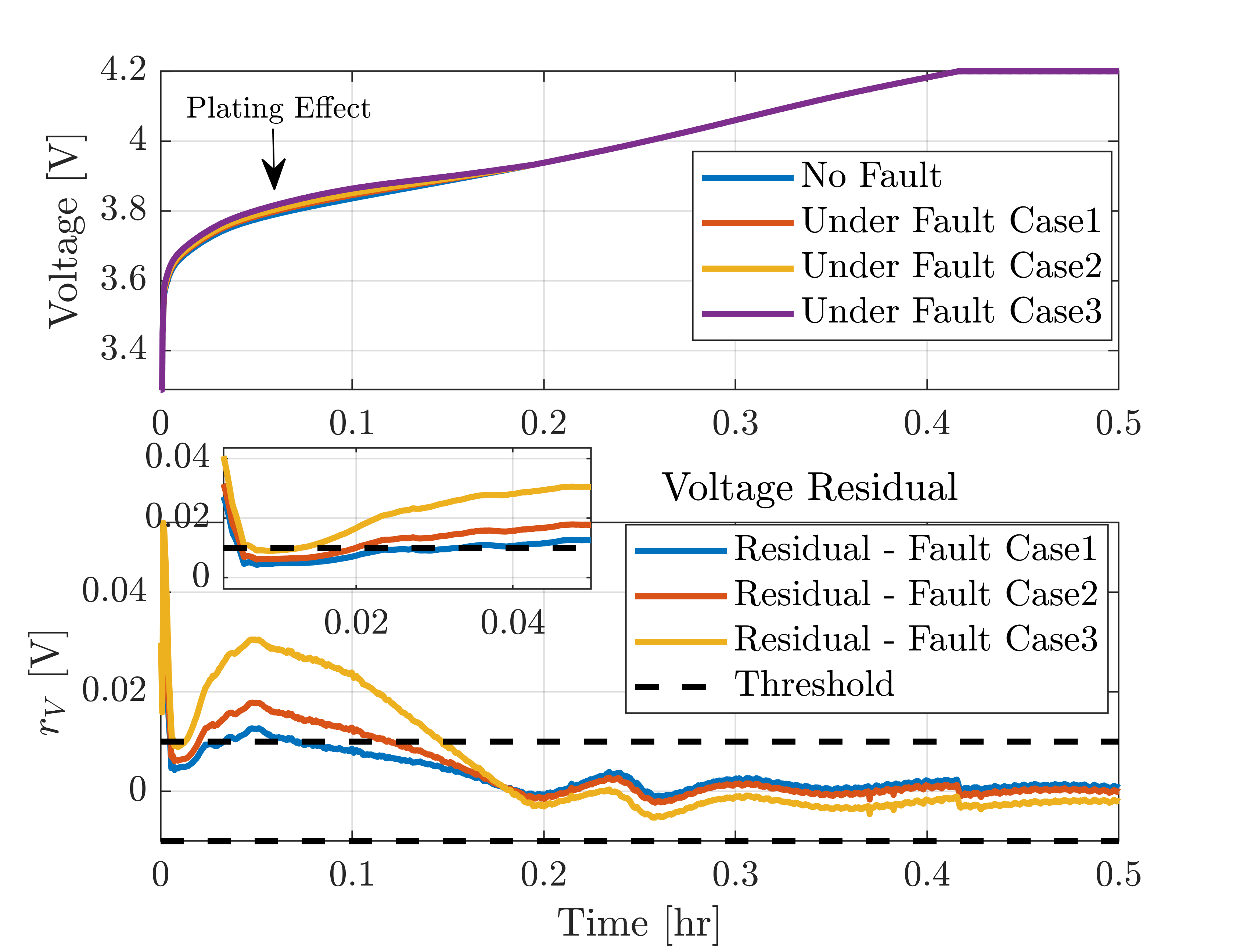}
    \caption{Residual responses under voltage faults.}
    \label{fig4B}
\end{figure}

\subsection{Thermal Fault Detection}
In this section, we discuss an experimental case study under thermal fault. {To emulate the thermal fault scenario in experimental setting, we have used an external heater to inject heat on the surface of the battery cell. In response to such injected heat, battery temperature increased until the external heat has been turned off.} The surface temperature response under such external heat is shown in the top plot of Fig. \ref{fig6}. This temperature data is fed to the proposed fault detection algorithm in order to verify its performance. {The threshold has been designed to be $\delta_T=0.5^o$C following the process mentioned in Section III.C.} As shown in the top plot of Fig. \ref{fig6}, the thermal fault was injected at $311$ s. In response to the injected thermal fault, the residual signal $r_T$ crossed the threshold at $315$ s as shown in the bottom plot of Fig. \ref{fig6}. Hence, the thermal fault was detected within $4$ seconds of its occurrence.

\begin{figure}
    \centering
    \includegraphics[trim = 0mm 0mm 0mm 0mm, clip, scale=0.8, width=0.95\linewidth]{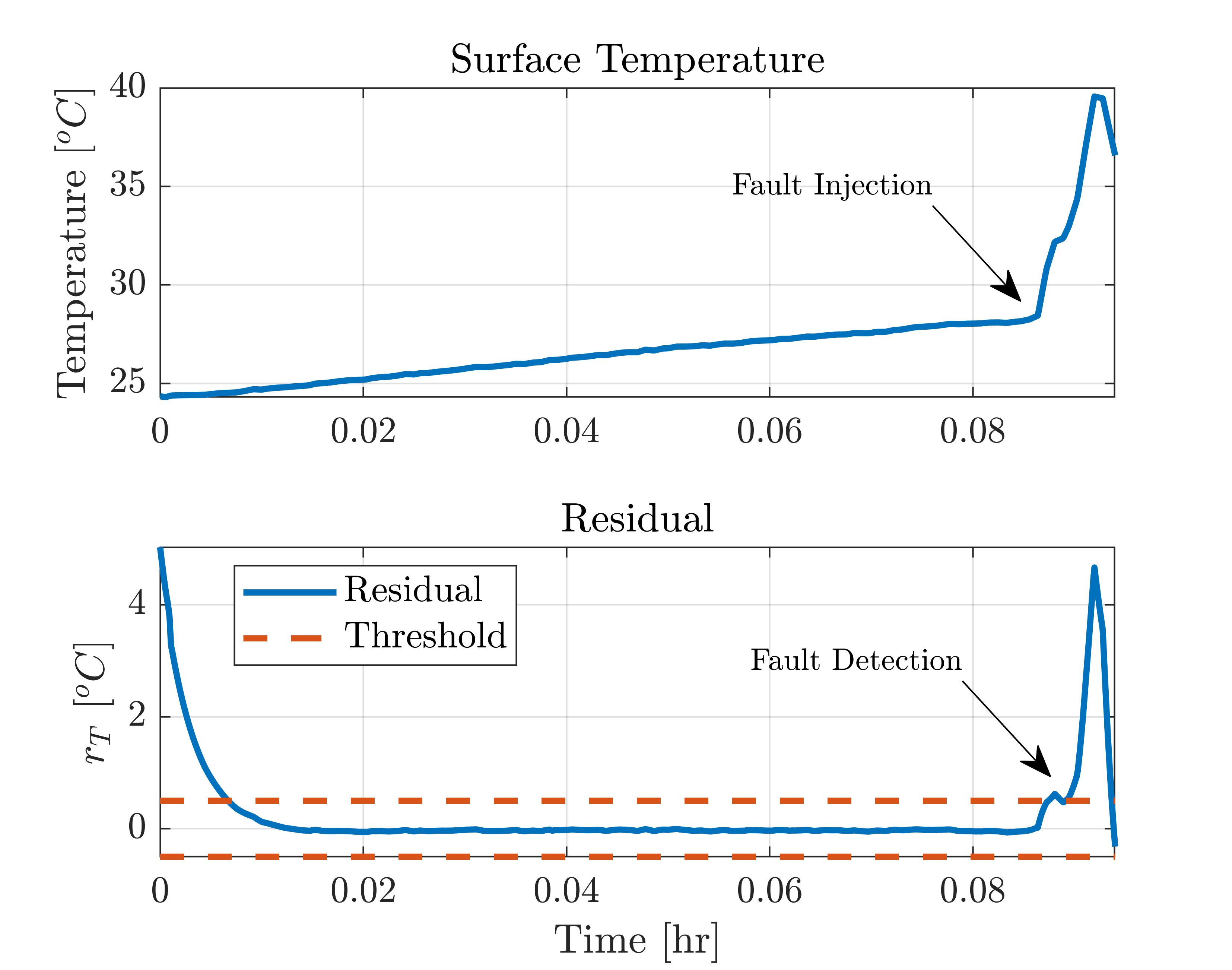}
    \caption{Residual response under thermal fault.}
    \label{fig6}
\end{figure}

\subsection{{Minimum Detectable Thermal Fault}}
{In this study, we explore the minimum size of the thermal fault detectable by the proposed scheme. We have simulated thermal faults of various magnitudes and fed this data to the proposed scheme. We have considered three cases: Case 1: amplitude 310 W, Case 2: amplitude 220 W, Case 3: amplitude 100 W. The surface temperature response and residual responses under these faults are shown in Fig. \ref{fig7}. We observed that: (i) the Case 1 and Case 2 faults are detected within 4 seconds, and (ii) the minimum detectable fault size is around 200 W which is slightly less than the fault in Case 2.}


\begin{figure}
    \centering
    \includegraphics[trim = 0mm 0mm 0mm 0mm, clip, scale=0.8, width=0.95\linewidth]{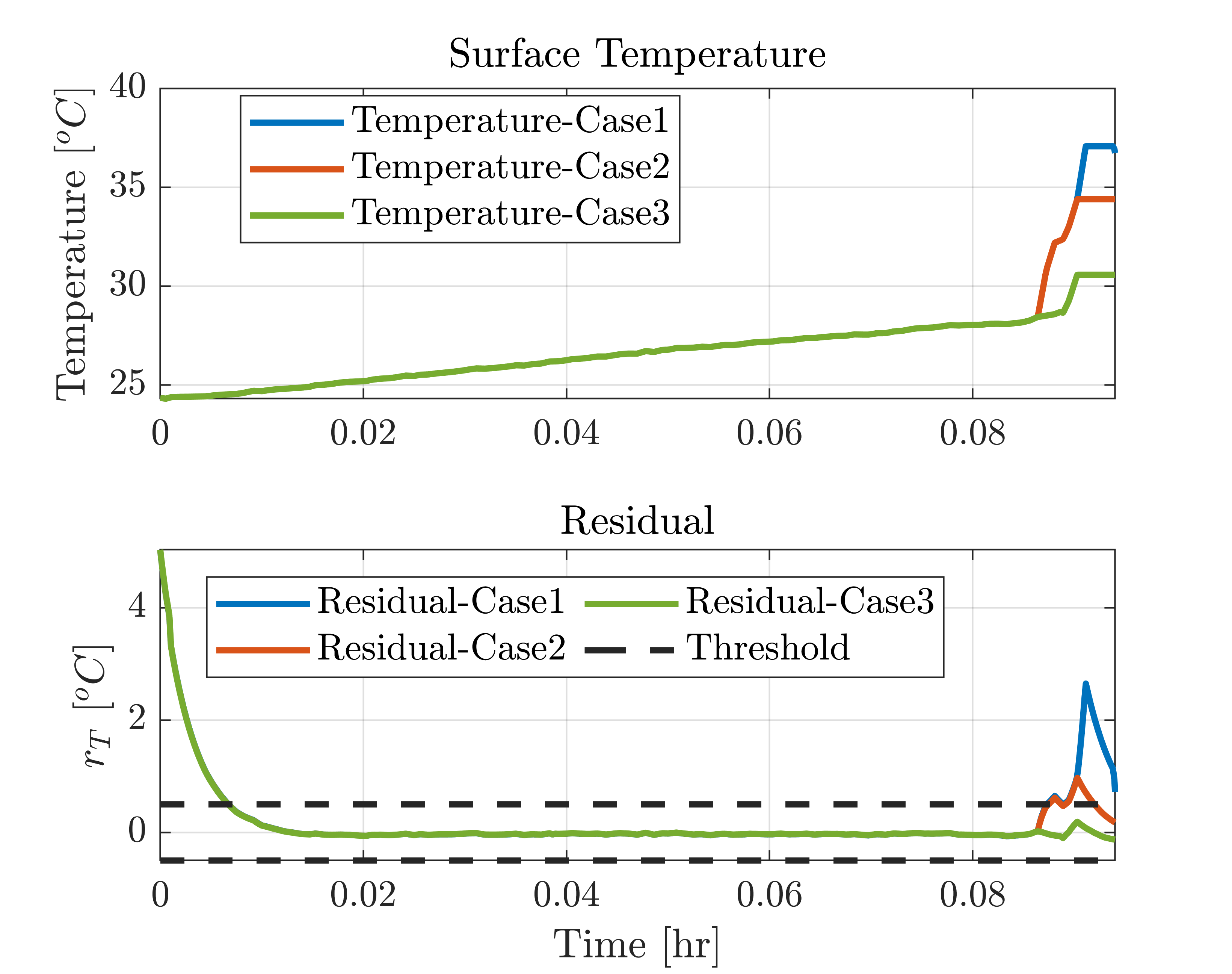}
    \caption{Thermal fault detection performance.}
    \label{fig7}
\end{figure}




\subsection{{Performance Comparison of Model-only and Model-and-Learning Approaches}}
{In this section, we compare the performance of the scheme with and without learning. In the first comparative study, we consider two cases under voltage fault (Fault case 3 in Section III.A): (i) \textit{Case 1}, where the learning algorithm is used in conjunction with the detection observer, and (ii) \textit{Case 2}, where only the detection observer has been used without any learning algorithm. The detection observer in \textit{Case 2} has been designed following standard eigen-value placement based approach. {For both of these cases, there is a \textit{decision maker} which uses the same threshold based approach mentioned in Section III.C. It should be noted that the threshold for \textit{Case 2} (0.075 V) is higher than that of \textit{Case 1} (0.01 V).} This is expected since in \textit{Case 2} the residual signal is corrupted with significantly more amount of uncertainties as there is no learning algorithm. On the other hand, the residual signal in \textit{Case 1} is much less affected by uncertainties due to learning. The residual responses under no fault condition and faulty condition are shown in Fig. \ref{fig5}. We can see that the residual crosses the threshold after the terminal voltage started showing faulty behavior for \textit{Case 1}. However, the residual signal does not cross the threshold for \textit{Case 2}. This shows that the use of learning algorithm can potentially detect smaller faults which would be undetected by without learning based approaches.} {Note that the gain in performance comes at the cost of training and implementation requirement of the learning algorithm. Such implementation would require additional tuning of the hyper-parameters and computational power to process the learning.}

\begin{figure}
    \centering
    \includegraphics[trim = 0mm 0mm 0mm 0mm, clip, scale=0.8, width=0.95\linewidth]{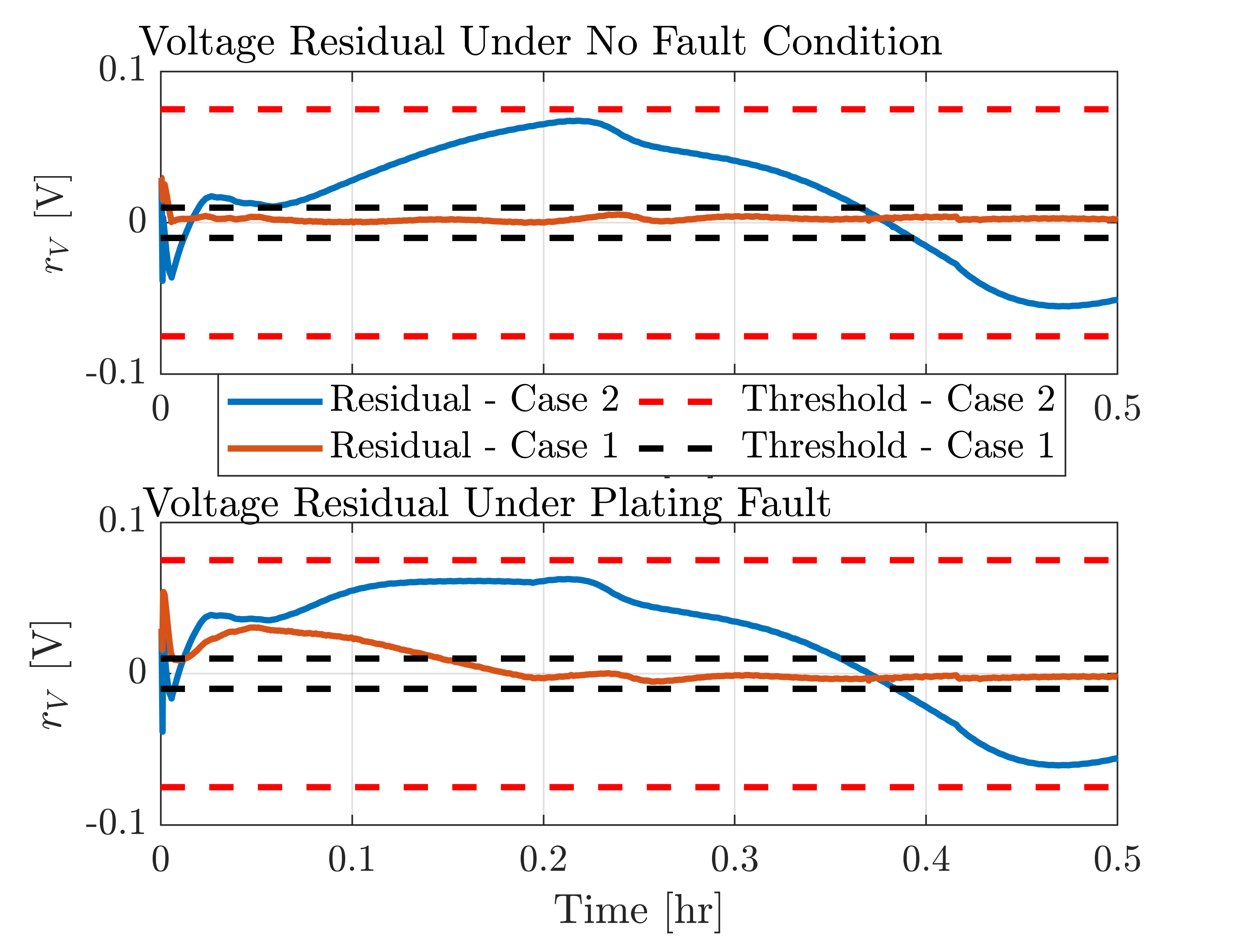}
    \caption{Comparison of with and without learning based detection approaches under Lithium plating fault.}
    \label{fig5}
\end{figure}

\subsection{{Performance under Uncertainties}}
{In this section, the robustness of proposed scheme is studied under parametric uncertainties. Specifically, we illustrate how parametric uncertainties can lead to false alarms. We study two cases: (i) performance of the electrochemical observer with uncertainty in the parameter $a_a$; (ii) performance of the temperature observer with uncertainty in the parameter $R_b$. The results are shown in Fig. \ref{fig4D}. We observed that the performance degrades in terms of false alarm beyond $10\%$ uncertainty in $a_a$ and $20\%$ uncertainty in $R_b$.}

\begin{figure}
    \centering
    \includegraphics[trim = 0mm 0mm 0mm 0mm, clip, scale=0.8, width=0.95\linewidth]{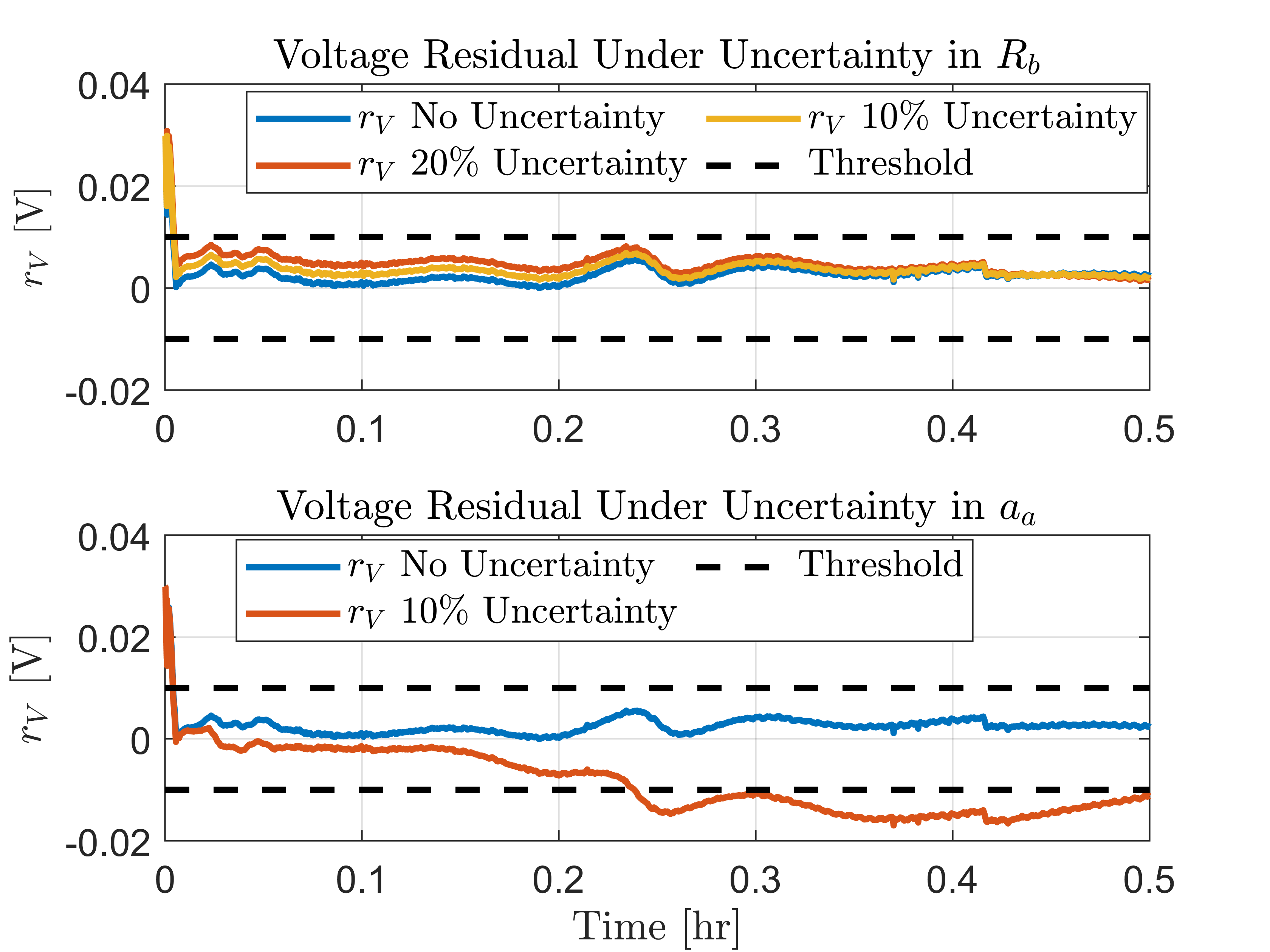}
    \caption{Voltage residuals under uncertainties in $a_a$ and $R_b$.}
    \label{fig4D}
\end{figure}

\section{Conclusion}
In this paper, we have studied a battery fault detection paradigm combining physics-based models with detection observer technique and Gaussian Process regression to detect voltage and thermal faults. Based on an experimentally identified electrochemical-thermal model, we have performed the following case studies: simulation study on voltage fault detection and experimental study on thermal fault detection. These case studies have shown the effectiveness of the proposed detection approach for battery fault detection. We plan to study future extensions including fault estimation, more comprehensive physics-based models, and battery packs. Furthermore, we also plan to improve threshold design approach utilizing comprehensive battery data under faulty and non-faulty conditions.

\appendix[Proof of Proposition 1]
First, we consider the following Lyapunov function candidate to analyze electrochemical detection observer error dynamics $W_{1t} = \tilde{z}_{1t}^T P_1 \tilde{z}_{1t}$ where $P_1$ is a positive definite symmetric matrix. The first order difference in the Lyapunov function is given by:
\begin{align}
     \Delta{W}_1 = & \tilde{z}_{1t}^T [(A_1-L_{VC1})^{T}P_1(A_1-L_{VC1})-P_1] \tilde{z}_{1t} \nonumber\\
    & + 2\tilde{z}_{1t}^T(A_1-L_{VC1})^{T} P_1 \eta_{1t} + \eta_{1t}^{T} P_1 \eta_{1t}. \label{lyap3}
\end{align}
where ${\eta_1} = -L_V ({\Delta_V} + \epsilon_V)$. Considering Holder's inequality and the inequality $2ab \leq \Gamma a^2 + \frac{1}{\Gamma}b^2$ with $a,b,\Gamma >0$, we can further re-write \eqref{lyap3} as
\begin{align}
     \Delta{W}_1 \leq & (\underline{\lambda}_Q + \Gamma x_1) \left\|\tilde{z}_{1t}\right\|^2 + (\overline{\lambda}_P + \frac{x_1}{\Gamma}) \left\|\eta_{1t}\right\|^2, \label{lyap4}
\end{align}
where $\Gamma$ is an arbitrary positive number, $$x_1 = \left\|(A_1-L_{VC1})^TP_1\right\|$$, $\underline{\lambda}_Q$ is the minimum eigen value of $[(A_1-L_{VC1})^TP_1(A_1-L_{VC1})-P_1]$, $\overline{\lambda}_P$ is the maximum eigen value of $P_1$. If first equation of \eqref{cond-1} is satisfied, then we have $\underline{\lambda}_Q + \Gamma x_1 < 0$. Consequently, we have $$\Delta{W}_1 <0$$ when $${\Delta_{V}}=0,\Delta_{T}=0,{\epsilon}_V=0,{\epsilon}_T=0$$. Hence, the estimation error $\tilde{z}_1$ and the residual signal $r_V$ will asymptotically converge to zero starting from a non-zero initial condition.

When $\Delta_{V} \neq 0, \Delta_{T} \neq 0, \epsilon_V \neq 0, \epsilon_T \neq 0$, then we can only guarantee $\Delta{W}_1 <0$ under the condition $\left\|\tilde{z}_{1t}\right\|^2 > -\frac{(\overline{\lambda}_P + \frac{x_1}{\Gamma})}{(\underline{\lambda}_Q + \Gamma x_1)}\left\| \eta_{1t}\right\|^2$. Hence, we can conclude that the estimation error $\tilde{z}_1$ and the residual signal $r_V$ will be uniformly bounded and converge within a ball of radius. Following similar steps, we can prove the convergence for the thermal detection observer.


%

%
%
%


\bibliographystyle{IEEEtran}
\bibliography{ref1}\ 

\end{document}